\newif\ifpublic
\newcommand{\eps}{\varepsilon}
\newcommand{\mb}{\mathbf}
\newcommand{\tsfrac}[2]{{\textstyle\frac{#1}{#2}}}
\newtheorem{theorem}{Theorem}
\newtheorem{corollary}[theorem]{Corollary}
\newtheorem{lemma}[theorem]{Lemma}
\newtheorem{definition}[theorem]{Definition}
\newcommand{\prob}[2]{\mathop{\mathrm{Pr}}_{#1}[#2]}
\newcommand{\F}{\mathbb{F}}
\newcommand{\Maj}{\mathrm{Maj}}
\newcommand{\AC}{\mathrm{AC}}
\begin{document}

\title{Separation of AC$^0[\oplus]$ Formulas and Circuits}
\author{Benjamin Rossman\footnote{Supported by NSERC}\\ Depts.\ of Math and Computer Science\\ University of Toronto \and Srikanth Srinivasan\\ Department of Mathematics\\ IIT Bombay}
\maketitle{}

\begin{abstract}
This paper gives the first separation between the power of {\em formulas} and {\em circuits} of equal depth in the $\AC^0[\oplus]$ basis (unbounded fan-in AND, OR, NOT and MOD$_2$ gates). 
We show, for all $d(n) \le O(\frac{\log n}{\log\log n})$, that there exist {\em polynomial-size depth-$d$ circuits} that are not equivalent to {\em depth-$d$ formulas of size $\smash{n^{o(d)}}$} (moreover, this is optimal in that $\smash{n^{o(d)}}$ cannot be improved to $\smash{n^{O(d)}}$).
This result is obtained by a combination of new lower and upper bounds for {\em Approximate Majorities}, the class of Boolean functions $\{0,1\}^n \to \{0,1\}$ 
that agree with the Majority function on $3/4$ fraction of inputs.
\medskip

\textbf{AC$^{\textbf 0}$[$\pmb\oplus$] formula lower bound.}
We show that every depth-$d$ $\AC^0[\oplus]$ formula of size~$s$ has a {\em $1/8$-error polynomial approximation} over $\F_2$ of degree $O(\frac1d\log s)^{d-1}$. This strengthens a classic $O(\log s)^{d-1}$ degree approximation for \underline{circuits} due to Razborov \cite{Razborov}. Since the Majority function has approximate degree $\Theta(\sqrt n)$, this result implies an $\exp(\Omega(dn^{1/2(d-1)}))$ lower bound on the depth-$d$ $\AC^0[\oplus]$ formula size of all Approximate Majority functions for all $d(n) \le O(\log n)$.\medskip

\textbf{Monotone AC$^{\textbf 0}$ circuit upper bound.}
For all $d(n) \le O({\frac{\log n}{\log\log n}})$, we give a randomized construction of depth-$d$ monotone $\smash{\AC^0}$ circuits (without NOT or MOD$_2$ gates) of size $\smash{\exp(O(n^{1/2(d-1)}))}$ that compute an Approximate Majority function. This strengthens a construction of \underline{formulas} of size $\exp(O(dn^{1/2(d-1)}))$ due to Amano \cite{Amano-appxmaj}.
\end{abstract}

\section{Introduction}

The relative power of formulas versus circuits is one of the great mysteries in complexity \mbox{theory}. The central question in this area is whether NC$^1$ (the class of languages decidable by polynomial-size Boolean formulas) is a proper subclass of P/poly (the class of languages decidable by polynomial-size Boolean circuits). Despite decades of efforts, this question remains wide open.\footnote{In this paper we focus on non-uniform complexity classes. The question of uniform-NC$^1$ vs.\ P is wide open as well.}
In the meantime, there has been progress on analogues of the NC$^1$ vs.\ P/poly question in certain restricted settings. For instance, in the monotone basis (with AND and OR gates only), the power of polynomial-size formulas vs.\ circuits was separated by the classic lower bound of Karchmer and Wigderson \cite{karchmer1990monotone} (on the monotone formula size of st-Connectivity).

The bounded-depth setting is another natural venue for investigating the question of formula vs.\ circuits. Consider the elementary fact that every depth-$d$ circuit of size $s$ is equivalent to a depth-$d$ formula of size at most $s^{d-1}$, where we measure {\em size} by the number of gates. This observation is valid with respect to any basis (i.e.\ set of gate types). In particular, we may consider the $\AC^0$ basis (unbounded fan-in AND, OR, NOT gates) and the $\AC^0[\oplus]$ basis (unbounded fan-in MOD$_2$ gates in addition to AND, OR, NOT gates). With respect to either basis, there is a natural depth-$d$ analogue of the NC$^1$ vs.\ P/poly question (where $d = d(n)$ is a parameter that may depend on $n$), namely whether every language decidable by polynomial-size depth-$d$ circuits is decidable by depth-$d$ formulas of size $n^{o(d)}$ (i.e.\ better than the trivial $n^{O(d)}$ upper bound).

It is reasonable to expect that this question could be resolved in the sub-logarithmic depth regime ($d(n) \ll \log n$), given the 
powerful lower bound techniques against $\AC^0$ circuits (H{\aa}stad's Switching Lemma \cite{Hastad}) and $\AC^0[\oplus]$ circuits (the Polynomial Method of Razborov \cite{Razborov} and Smolensky \cite{Smolensky87}). However, because the standard way of applying these techniques does not distinguish between circuits and formulas, it is not clear how to prove quantitatively stronger lower bounds on formula size vis-a-vis circuit size of a given function. Recent work of Rossman \cite{rossman2015average} developed a new way of applying H{\aa}stad's Switching Lemma to $\AC^0$ formulas, in order to prove an $\exp(\Omega(dn^{1/(d-1)}))$ lower bound on the formula size of the Parity function for all $d \le O(\log n)$. Combined with the well-known $\exp(O(n^{1/(d-1)}))$ upper bound on the circuit size of Parity, this yields an asymptotically optimal separation in the power of depth-$d$ $\AC^0$ formulas vs. circuits for all $d(n) \le O(\frac{\log n}{\log\log n})$, as well as a super-polynomial separation for all $\omega(1) \le d(n) \le o(\log n)$.

In the present paper, we carry out a similar development for formulas vs.\ circuits in the $\AC^0[\oplus]$ basis, obtaining both an asymptotically optimal separation for all $d(n) \le O(\frac{\log n}{\log\log n})$ and a super-polynomial separation for all $\omega(1) \le d(n) \le o(\log n)$. Our target functions lie in the class of {\em Approximate Majorities}, here defined as Boolean functions $\{0,1\}^n \to \{0,1\}$ that approximate the Majority function on $3/4$ fraction of inputs. First, we show how to apply the Polynomial Method to obtain better parameters in the approximation of $\AC^0[\oplus]$ formulas by low-degree polynomials over $\F_2$. This leads to an $\exp(\Omega(dn^{1/2(d-1)}))$ lower bound on the $\AC^0[\oplus]$ formula size of all Approximate Majority functions. The other half of our formulas vs.\ circuits separation comes from an $\exp(O(n^{1/2(d-1)}))$ upper bound on the $\AC^0[\oplus]$ circuit size of some Approximate Majority function. In fact, this upper bound is realized by a randomized construction of monotone $\AC^0$ circuits (without NOT or MOD$_2$ gates). Together these upper and lower bound give our main result:

\begin{theorem}\
\begin{enumerate}[\normalfont(i)]
\item
For all $2 \le d(n) \le O(\frac{\log n}{\log \log n})$, there exist $\AC^0[\oplus]$ circuits \textup(in fact, monotone $\AC^0$ circuits\textup) of depth $d$ and size $\mathrm{poly}(n)$ that are not equivalent to any $\AC^0[\oplus]$ formulas of depth $d$ and size $n^{o(d)}$.
\item
For all $\omega(1) \le d(n) \le o(\log n)$, the class of languages decidable by polynomial-size depth-$d$ $\AC^0[\oplus]$ formulas is a proper subclass of the class of languages decidable by polynomial-size depth-$d$ $\AC^0[\oplus]$ circuits. 
\end{enumerate}
\end{theorem}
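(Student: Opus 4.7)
I would combine the two headline bounds from the abstract by instantiating an Approximate Majority (AM) function on an appropriate number $m = m(n,d)$ of input bits. Fixing a favorable realization of the randomized monotone upper bound yields a specific AM function $f_m$ on $m$ variables, which we view as a function on $n$ variables by ignoring the remaining $n-m$ bits. The key calibration is to choose $m$ large enough that the formula lower bound $\exp(\Omega(d\,m^{1/2(d-1)}))$ exceeds $n^{o(d)}$, yet small enough that the circuit upper bound $\exp(O(m^{1/2(d-1)}))$ stays polynomial in~$n$. The natural choice is $m := \min(n,\,\lceil(\log n)^{2(d-1)}\rceil)$, which equals $(\log n)^{2(d-1)}$ in the depth regime of part~(i) (for a suitably small absolute constant in the $O(\cdot)$ of the depth hypothesis) and is at most~$n$ always.

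For part (i), the upper bound gives a depth-$d$ monotone $\AC^0$ circuit for $f_m$ of size $\exp(O(\log n)) = n^{O(1)}$. If $f_m$ admitted a depth-$d$ $\AC^0[\oplus]$ formula of size~$s$, the formula lower bound would furnish an $\F_2$-polynomial approximation of $f_m$ of degree $O(\frac{1}{d}\log s)^{d-1}$ and error~$1/8$. Since AM inherits from Majority an approximate $\F_2$-degree of $\Omega(\sqrt{m})$ (altering a $1/4$-fraction of inputs raises approximation error by at most $1/4$, leaving it at $3/8<1/2$, and the $\F_2$-approximate degree of Majority at any constant error below~$1/2$ is $\Omega(\sqrt{m})$ by the classical Razborov--Smolensky argument), we would conclude $\frac{1}{d}\log s \geq \Omega(m^{1/2(d-1)}) = \Omega(\log n)$, so $s \geq n^{\Omega(d)}$, exceeding $n^{o(d)}$.

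Part (ii) follows from the same construction: for $\omega(1) \le d(n) \le o(\log n)$, the formula lower bound $\exp(\Omega(d\,m^{1/2(d-1)}))$ is super-polynomial in~$n$ in both branches of the~$\min$ (using $d = \omega(1)$ when $m = (\log n)^{2(d-1)} \le n$; and using $d = o(\log n)$ when $m = n$, in which case $d \cdot n^{1/2(d-1)}/\log n \to \infty$), while the circuit remains polynomial-size. The resulting family of AM functions yields the desired language separation. The substantive work thus lies in the two bounds themselves; granted those as black boxes, the only remaining wrinkle is the small degree-transfer argument from Majority to Approximate Majority.
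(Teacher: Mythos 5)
Your strategy for part (i) — fix a good outcome of the randomized monotone construction on $m$ of the $n$ input bits, with $m$ calibrated so that $\exp(O(m^{1/2(d-1)}))$ is polynomial in $n$ while the formula bound $\exp(\Omega(d\,m^{1/2(d-1)}))$ becomes $n^{\Omega(d)}$ — is exactly how the paper combines Corollary~\ref{cor:formula-lbd} and Theorem~\ref{thm:amano-ckt}, and your transfer of the $\Omega(\sqrt m)$ approximate-degree bound from Majority to Approximate Majority is the paper's triangle-inequality argument (Corollary~\ref{cor:approx-maj-deg}). With the suitably small constant in the depth hypothesis, part (i) of your proposal is correct: for $d-1\le \frac{\log n}{20\log\log n}$ one has $m^{1/2(d-1)}=\Theta(\log n)\ge 10\log m$, so Theorem~\ref{thm:amano-ckt} applies at depth $d$ on $m$ variables, and the rest goes through as you say.

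Part (ii), however, has a gap in the high-depth regime, which is precisely where (ii) is not subsumed by (i). The circuit upper bound is not free of a depth hypothesis: its proof needs $A=\lfloor m^{1/2(d-1)}\rfloor\ge 10\log m$, i.e.\ $d\le c\,\frac{\log m}{\log\log m}$ for a specific constant. With your $m=\min(n,\lceil(\log n)^{2(d-1)}\rceil)$ this fails throughout the top of the range of (ii): in the branch $m=n$ (i.e.\ $d-1>\frac{\log n}{2\log\log n}$) one has $n^{1/2(d-1)}\le\log n<10\log n$, and even in the branch $m=(\log n)^{2(d-1)}$ it fails once $d-1>\frac{\log n}{20\log\log n}$, since then $m^{1/2(d-1)}=\log n<10\log m=20(d-1)\log\log n$. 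So the claim that ``the circuit remains polynomial-size'' is unjustified for $\Omega(\frac{\log n}{\log\log n})\le d\le o(\log n)$: Theorem~\ref{thm:amano-ckt} simply cannot be invoked at depth $d$ there. The missing (easy but necessary) ingredient is to decouple the circuit's depth from $d$, using that a depth-$d'$ circuit with $d'\le d$ is in particular a depth-$d$ circuit. Concretely, for $d-1\ge\frac{\log n}{2\log(10\log n)}$ take $m=n$ and build the circuit at depth $d'$ with $d'-1=\lfloor\frac{\log n}{2\log(10\log n)}\rfloor$, which satisfies the construction's hypothesis and has size $\exp(O(n^{1/2(d'-1)}))=\exp(O(\log n))=\mathrm{poly}(n)$; in the intermediate range $\frac{\log n}{20\log\log n}<d-1<\frac{\log n}{2\log(10\log n)}$ keep $m=(\log n)^{2(d-1)}$ but take $d'$ slightly below $d$ so that $m^{1/2(d'-1)}=(\log n)^{(d-1)/(d'-1)}$ lands between $10\log m$ and $O(\log n)$ (possible since perturbing $d'-1$ by $1$ changes the exponent by only $1+O(\frac{1}{d})$). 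In all these cases the depth-$d$ formula lower bound $\exp(\Omega(d\,m^{1/2(d-1)}))$ remains superpolynomial — for $m=n$ this is your correct observation that $d\,n^{1/2(d-1)}/\log n\to\infty$ when $d=o(\log n)$ — so with this repair your argument matches the paper's intended derivation.
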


Separation (i) is asymptotically optimal, in view of the aforementioned simulation of $\mathrm{poly}(n)$-size depth-$d$ circuits by depth-$d$ formulas of size $n^{O(d)}$. Separation (ii)  resembles an analogue of NC$^1$ $\ne$ P/poly (or rather NC$^1$ $\ne$ AC$^1$) within the class $\AC^0[\oplus]$. In fact, extending separation (ii) from depth $o(\log n)$ to depth $\log n$ is equivalent the separation of NC$^1$ and AC$^1$.

\subsection{Proof outline}

\paragraph{Improved polynomial approximation.} The lower bound for $\AC^0[\oplus]$ formulas follows the general template due to Razborov~\cite{Razborov} on proving lower bounds for $\AC^0[\oplus]$ circuits using low-degree polynomials over $\F_2$. Razborov showed that for any Boolean function $f:\{0,1\}^n\rightarrow \{0,1\}$ that has an $\AC^0[\oplus]$ circuit of size $s$ and depth $d$, there is a randomized polynomial $\mb P$ of degree $O(\log s)^{d-1}$ that computes $f$ correctly on each input with probability $\frac{7}{8}$ (we call such polynomials $1/8$-error \emph{probabilistic polynomials}). By showing that some explicit Boolean function $f$ (e.g.\ the Majority function or the $\mathrm{MOD}_q$ function for $q$ odd) on $n$ variables does not have such an approximation of degree less than $\Omega(\sqrt{n})$~\cite{Razborov,Smolensky87,Smolensky93}, we get that any $\AC^0[\oplus]$ circuit of depth $d$ computing $f$ must have size $\exp(\Omega(n^{1/2(d-1)}))$.

In this paper, we improve the parameters of Razborov's polynomial approximation from above for $\AC^0[\oplus]$ \emph{formulas}. More precisely, for $\AC^0[\oplus]$ formulas of size $s$ and depth $d$, we are able to construct $1/8$-error probabilistic polynomials of degree $O(\frac{1}{d}\log s)^{d-1}$. (Since every depth-$d$ circuit of size $s$ is equivalent to a depth-$d$ formula of size at most $s^{d-1}$, this result implies Razborov's original theorem that $\AC^0[\oplus]$ circuits of size $s$ and depth $d$ have $1/8$-error probabilistic polynomials of degree $O(\log s)^{d-1}$.)

We illustrate the idea behind this improved polynomial approximation with the special case of a balanced formula (i.e. all gates have the same fan-in) of fan-in $t$ and depth $d$. Note that the size of the formula (number of gates) is $\Theta(t^{d-1})$ and hence it suffices in this case to show that it has a $1/8$-error probabilistic polynomial of degree $O(\log t)^{d-1}$. We construct the probabilistic polynomial inductively. Given a balanced formula $F$ of depth $d$ and fan-in $t$, let $F_1,\ldots,F_t$ be its subformulas of depth $d-1$. Inductively, each $F_i$ has a $1/8$-error probabilistic polynomial ${\mb P}_i$ of degree $O(\log t)^{d-2}$ and by a standard error-reduction~\cite{kopparty2012certifying}, it has a $(1/16t)$-error probabilistic polynomial of degree $O(\log t)^{d-1}$ (in particular, at any given input $x\in \{0,1\}^n$, the probability that \emph{there exists} an $i\in [t]$ such that ${\mb P}_i(x)\neq F_i(x)$ is at most $1/16$). Using Razborov's construction of a $1/16$-error probabilistic polynomial of degree $O(1)$ for the output gate of $F$ and composing this with the probabilistic polynomials ${\mb P}_i$, we get the result for balanced formulas. This idea can be extended to general (i.e. not necessarily balanced) formulas with a careful choice of the error parameter for each subformula $F_i$ to obtain the stronger polynomial approximation result. 

\paragraph{Improved formula lower bounds.} Combining the above approximation result with known lower bounds for polynomial approximation~\cite{Razborov,Smolensky87,Smolensky93}, we can already obtain stronger lower bounds for $\AC^0[\oplus]$ formulas than are known for $\AC^0[\oplus]$ circuits. For instance, it follows that any $\AC^0[\oplus]$ formula of depth $d$ computing the Majority function on $n$ variables must have size $\exp(\Omega(dn^{1/2(d-1)}))$ for all $d \le O(\log n)$, which is stronger than the corresponding circuit lower bound. Similarly stronger formula lower bounds also follow for the $\mathrm{MOD}_q$ function ($q$ odd). 

\paragraph{Separation between formulas and circuits.} However, the above improved lower bounds do not directly yield the claimed separation between $\AC^0[\oplus]$ formulas and circuits. This is because we do not have circuits computing (say) the Majority function of the required size. To be able to prove our result, we would need to show that the Majority function has $\AC^0[\oplus]$ circuits of depth $d$ and size $\exp(O(n^{1/2(d-1)}))$ (where the constant in the $O(\cdot)$ is independent of $d$). However, as far as we know, the strongest result in this direction~\cite{KPPY84majubd} only yields $\AC^0[\oplus]$ circuits of size greater than 
$\exp(\Omega(n^{1/(d-1)})),$\footnote{Indeed, this is inevitable with all constructions that we are aware of, since they are actually $\AC^0$ circuits and it is known by a result of H\r{a}stad~\cite{Hastad} that any $\AC^0$ circuit of depth $d$ for the Majority function must have size $\exp(\Omega(n^{1/(d-1)})).$} which is superpolynomially larger than the upper bound.

To circumvent this issue, we change the hard functions to the class of \emph{Approximate Majorities}, which is the class of Boolean functions that agree with Majority function on \emph{most} inputs. While this has the downside that we no longer are dealing with an explicitly defined function, the advantage is that the polynomial approximation method of Razborov yields tight lower bounds for some functions from this class. 

Indeed, since the method of Razborov is based on polynomial approximations, it immediately follows that the same proof technique also yields the same lower bound for computing Approximate Majorities. Formally, any $\AC^0[\oplus]$ circuit of depth $d$ computing \emph{any} Approximate Majority must have size $\exp(\Omega(n^{1/2(d-1)}))$. On the upper bound side, it is known from the work of O'Donnell and Wimmer~\cite{odonnellwimmer} and Amano~\cite{Amano-appxmaj} that there \emph{exist} Approximate Majorities that can be computed by monotone $\AC^0$ \emph{formulas} of depth $d$ and size $\exp(O(dn^{1/2(d-1)}))$. (Note that the double exponent $\frac{1}{2(d-1)}$ is now the same in the upper and lower bounds.)

We use the above ideas for our separation between $\AC^0[\oplus]$ formulas and circuits. Plugging in our stronger polynomial approximation for $\AC^0[\oplus]$ formulas, we obtain that any $\AC^0[\oplus]$ formula of depth $d$ computing any Approximate Majority must have size $\exp(\Omega(dn^{1/2(d-1)})).$ In particular, this implies that Amano's construction is tight (up to the universal constant in the exponent) even for $\AC^0[\oplus]$ formulas. 

Further, we also modify Amano's construction~\cite{Amano-appxmaj} to obtain better constant-depth \emph{circuits} for Approximate Majorities: we show that there exist Approximate Majorities that are computed by monotone $\AC^0$ circuits of depth $d$ of size $\exp(O(n^{1/2(d-1)}))$ (the constant in the $O(\cdot)$ is a constant independent of $d$). 

\paragraph{Smaller circuits for Approximate Majority.} Our construction closely follows Amano's, which in turn is related to Valiant's probabilistic construction~\cite{Valiant-maj} of monotone formulas for the Majority function. However, we need to modify the construction in a suitable way that exploits the fact that we are constructing \emph{circuits}. This modification is in a similar spirit to a construction of Hoory, Magen and Pitassi~\cite{hoory2006monotone} who modify Valiant's construction to obtain smaller monotone circuits (of depth $\Theta(\log n)$) for computing the Majority function exactly.

At a high level, the difference between Amano's construction and ours is as follows. Amano constructs random formulas $F_i$ of each depth $i\leq d$ as follows. The formula $F_1$ is the AND of $a_1$ independent and randomly chosen variables. For even (respectively odd) $i>1$, $F_i$ is the OR (respectively AND) of $a_i$ independent and random copies of $F_{i-1}$. For suitable values of $a_1,\ldots,a_d\in \mathbb{N}$, the random formula $F_d$ computes an Approximate Majority with high probability. In our construction, we build a depth $i$ circuit $C_i$ for each $i\leq d$ in a similar way, except that each $C_i$ now has $M$ different outputs. Given such a $C_{i-1}$, we construct $C_i$ by taking $M$ independent randomly chosen subsets $T_1,\ldots,T_M$ of $a_i$ many outputs of $C_{i-1}$ and adding gates that compute either the OR or AND (depending on whether $i$ is even or odd) of the gates in $T_i$. Any of the $M$ final gates of $C_d$ now serves as the output gate. By an analysis similar to Amano's (see also~\cite{hoory2006monotone}) we can show that this computes an Approximate Majority with high probability, which finishes the proof.\footnote{This is a slightly imprecise description of the construction as the final two levels of the circuit are actually defined somewhat differently.}

\section{Preliminaries}

Throughout, $n$ will be a growing parameter. We will consider Boolean functions on $n$ variables, i.e. functions of the form $f:\{0,1\}^n\rightarrow \{0,1\}$. We will sometimes identify $\{0,1\}$ with the field $\F_2$ in the natural way and consider functions $f:\F_2^n \rightarrow \F_2$ instead.

Given a Boolean vector $y\in \{0,1\}^n$, we use $|y|_0$ and $|y|_1$ to denote the number of $0$s and number of $1$s respectively in $y$. 

The Majority function on $n$ variables, denoted $\mathrm{MAJ}_n$ is the Boolean function that maps inputs $x\in \{0,1\}^n$ to $1$ if and only if $|x|_1 > n/2$. 

\begin{definition}
\label{defn:approx-maj}
An $(\varepsilon,n)$-Approximate Majority is a function $f:\{0,1\}^n\rightarrow \{0,1\}$ such that $\prob{x \in \{0,1\}^n}{f(x) \neq \Maj_n(x)} \leq \varepsilon$. 
\end{definition}

As far as we know, the study of this class of functions was initiated by O'Donnell and Wimmer~\cite{odonnellwimmer}. See also~\cite{Amano-appxmaj, BlaisTan}.

We refer the reader to~\cite{AroraBarak, Jukna} for standard definitions of Boolean circuits and formulas. We use $\AC^0$ circuits (respectively formulas) to denote circuits (respectively formulas) of constant depth made up of $\mathrm{AND},\ \mathrm{OR}$ and $\mathrm{NOT}$ gates. Similarly, $\AC^0[\oplus]$ circuits (respectively formulas) will be circuits (respectively formulas) of constant depth made up of $\mathrm{AND},\ \mathrm{OR},\ \mathrm{MOD}_2$ and $\mathrm{NOT}$ gates.

The size of a circuit will denote the number of gates in the circuit and the size of a formula will denote the number of its leaves which is within a constant multiplicative factor of the number of gates in the formula.\footnote{We assume here without loss of generality that the formula does not contain a gate of fan-in $1$ feeding into another.}

\section{Lower Bound}
\label{sec:lbd}

In this section, we show that any $\AC^0[\oplus]$ formulas of depth $d$ computing a $(1/4,n)$-Approximate Majority must have size at least $\exp(\Omega(dn^{1/2(d-1)}))$ for all $d \le O(\log n)$. 

We work over the field $\F_2$ and identify it with $\{0,1\}$ in the natural way. The following concepts are standard in circuit complexity (see, e.g., Beigel's survey~\cite{Beigelpolysurvey}).

\begin{definition}
Fix any $\varepsilon\in [0,1]$. A polynomial $P\in \F_2[X_1,\ldots,X_n]$ is said to be an \emph{$\varepsilon$-approximating polynomial} for a Boolean function $f:\{0,1\}^n\rightarrow \{0,1\}$ if 
\[
\prob{x\in \{0,1\}^n}{f(x) = P(x)}\geq 1-\varepsilon.
\]
\end{definition}

We will use the following result of Smolensky~\cite{Smolensky93} (see also Szegedy's PhD thesis~\cite{Szegedy1989}). 

\begin{lemma}[Smolensky~\cite{Smolensky93}]
\label{lem:razb-maj}
Let $\varepsilon \in (0,\frac{1}{2})$ be any fixed constant. Any $(\frac12-\varepsilon)$-approximating polynomial for the Majority function on $n$ variables must have degree $\Omega(\sqrt{n})$. 
\end{lemma}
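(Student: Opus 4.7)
The plan is to apply Smolensky's dimension-counting framework to any putative low-degree approximator. Fix a polynomial $P\in\F_2[x_1,\ldots,x_n]$ of degree $D$ and let $A=\{x\in\{0,1\}^n : P(x)=\Maj_n(x)\}$, so by hypothesis $|A|\ge(\tfrac12+\varepsilon)2^n$. Let $\mc V\subseteq\F_2^A$ denote the image of the restriction map from $\F_2$-polynomials of degree at most $\lceil n/2\rceil+D$ to functions $A\to\F_2$. Since there are only $\binom{n}{\le\lceil n/2\rceil+D}$ multilinear monomials of that degree,
\[
\dim\mc V\ \le\ \sum_{k\le\lceil n/2\rceil+D}\binom{n}{k}.
\]
The crux is to show that $\mc V=\F_2^A$, which yields $|A|=\dim\mc V$; combining with $|A|\ge(\tfrac12+\varepsilon)2^n$ and the elementary central-binomial tail estimate $\sum_{k\le n/2+t}\binom{n}{k}\le(\tfrac12+O(t/\sqrt n))\cdot 2^n$ then forces $D=\Omega_\varepsilon(\sqrt n)$.

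To prove $\mc V=\F_2^A$, I would take an arbitrary $g:A\to\F_2$, extend to $\bar g:\{0,1\}^n\to\F_2$, and expand $\bar g$ as a multilinear polynomial over $\F_2$. Monomials of degree at most $\lceil n/2\rceil+D$ require no work. For each higher-degree monomial $m_T(x)=\prod_{i\in T}x_i$ the key pointwise identity is $m_T\cdot(1+\Maj_n)\equiv 0$ whenever $|T|>n/2$ (since $m_T(x)=1$ forces $|x|_1\ge|T|>n/2$ and hence $\Maj_n(x)=1$), and dually $m_T^c\cdot\Maj_n\equiv 0$ for the co-monomials $m_T^c=\prod_{i\in T}(1+x_i)$ when $|T|>n/2$. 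Restricted to $A$ these become $m_T\cdot(1+P)\equiv 0$ and $m_T^c\cdot P\equiv 0$, which provide nontrivial $\F_2$-linear dependence relations in $\F_2^A$ between high-degree monomials and polynomials involving $P$. Passing through the basis change $m_T=\sum_{S\subseteq T}m_S^c$ and iterating these dependencies as $|T|$ descends should express every high-degree monomial on $A$ as a combination of polynomials of degree at most $\lceil n/2\rceil+D$.

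The main obstacle will be executing this reduction cleanly: the naive substitutions $m_T\mapsto m_T\cdot P$ and $m_T^c\mapsto m_T^c\cdot(1+P)$ on $A$ both \emph{raise} rather than lower the apparent degree, so the argument cannot be a formal substitution but must be a linear-algebraic identification in $\F_2^A$ of the dependence relations induced by $P-\Maj_n\equiv 0$ on $A$, used to trade each high-degree element against lower-degree ones and against polynomial multiples of $P$ of degree at most $D$. This is exactly Smolensky's adaptation of his $\mathrm{MOD}_q$ argument to $\Maj_n$; granting it, the remaining dimension comparison with $\binom{n}{\le\lceil n/2\rceil+D}$ and the binomial tail estimate are routine and yield the claimed $\Omega(\sqrt n)$ bound.
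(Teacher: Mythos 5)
The paper itself does not prove this lemma --- it is quoted from Smolensky \cite{Smolensky93} (see also Szegedy's thesis) --- so your proposal has to stand on its own, and as written it has a genuine gap at exactly the point you flag. The surrounding framework is right: the agreement set $A$ with $|A|\ge(\tfrac12+\varepsilon)2^n$, the bound $\dim\mathcal{V}\le\sum_{k\le\lceil n/2\rceil+D}\binom{n}{k}$, the binomial tail estimate, and the two pointwise vanishing identities ($m_T\cdot(1+\Maj_n)\equiv 0$ and $m_T^c\cdot\Maj_n\equiv 0$ for $|T|>n/2$). But the claim $\mathcal{V}=\F_2^A$ \emph{is} the content of the lemma, and your proposed mechanism for it --- ``iterating these dependencies as $|T|$ descends'' after the basis change $m_T=\sum_{S\subseteq T}m_S^c$ --- is not an argument: you correctly observe that the substitutions $m_T\mapsto m_T\cdot P$ and $m_T^c\mapsto m_T^c\cdot(1+P)$ raise degree, and you then defer to ``Smolensky's adaptation,'' i.e.\ you grant precisely the step that needs proof. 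Nothing in the sketch shows that any descent terminates, or ever lowers degree.

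The missing idea is a one-step identity, not a descent. For $|T|>n/2$, expand $m_T=\sum_{R\subseteq T}m_R^c$ and set $L_T:=\sum_{R\subseteq T,\ |R|\le\lceil n/2\rceil-1}m_R^c$, a polynomial of degree at most $\lceil n/2\rceil-1$. If $\Maj_n(x)=1$ then $|x|_0\le\lceil n/2\rceil-1$, so every discarded term $m_R^c$ with $|R|\ge\lceil n/2\rceil$ vanishes at $x$, whence $m_T(x)=L_T(x)$; if $\Maj_n(x)=0$ then $m_T(x)=0$ since $|x|_1\le n/2<|T|$. Hence $m_T=\Maj_n\cdot L_T$ pointwise on all of $\{0,1\}^n$, and restricting to $A$ gives $m_T|_A=(P\cdot L_T)|_A$, of degree at most $D+\lceil n/2\rceil-1$. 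Since monomials of degree at most $\lceil n/2\rceil$ need no work, every $g:A\to\F_2$ is the restriction of a polynomial of degree at most $\lceil n/2\rceil+D$, i.e.\ $\mathcal{V}=\F_2^A$. Your dimension comparison then yields $\varepsilon 2^n\le O\bigl((D+1)\binom{n}{\lfloor n/2\rfloor}\bigr)=O\bigl((D+1)2^n/\sqrt n\bigr)$, so $D=\Omega(\varepsilon\sqrt n)$. Note that the multiplication by $P$ is applied to the already low-degree truncation $L_T$, not to $m_T$ or $m_T^c$ themselves; this is how the degree-raising obstacle you identified is avoided, with no iteration needed.
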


\begin{corollary}
\label{cor:approx-maj-deg}
Let $f$ be any $(1/4,n)$-Approximate Majority and $\varepsilon\in (0,1/4)$ an arbitrary constant. Then any $(\frac{1}{4}-\varepsilon)$-approximating polynomial for $f$ must have degree $\Omega(\sqrt{n})$.
\end{corollary}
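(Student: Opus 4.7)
The plan is to derive this immediately from Lemma~\ref{lem:razb-maj} by a union bound argument. Suppose $P$ is a $(\tfrac14 - \varepsilon)$-approximating polynomial for $f$, where $f$ is a $(1/4, n)$-Approximate Majority. I would like to show that $P$ is then a $(\tfrac12 - \varepsilon)$-approximating polynomial for $\Maj_n$ itself, at which point Smolensky's lemma forces $\deg(P) = \Omega(\sqrt{n})$ and we are done.

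The key inequality is the triangle/union bound on the event of disagreement. For a uniformly random $x \in \{0,1\}^n$,
\[
\prob{x}{P(x) \neq \Maj_n(x)} \leq \prob{x}{P(x) \neq f(x)} + \prob{x}{f(x) \neq \Maj_n(x)}.
\]
By hypothesis, the first term on the right is at most $\tfrac14 - \varepsilon$, and by the definition of a $(1/4, n)$-Approximate Majority (Definition~\ref{defn:approx-maj}), the second term is at most $\tfrac14$. Adding these gives $\tfrac12 - \varepsilon$, so $P$ agrees with $\Maj_n$ on at least a $(\tfrac12 + \varepsilon)$-fraction of inputs.

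Now I would invoke Lemma~\ref{lem:razb-maj} with the constant $\varepsilon \in (0, \tfrac14) \subseteq (0, \tfrac12)$: since $P$ is a $(\tfrac12 - \varepsilon)$-approximating polynomial for $\Maj_n$, its degree must be $\Omega(\sqrt{n})$, with the implied constant depending on $\varepsilon$ (as in Smolensky's statement). There is no real obstacle here; the only thing to check is that the range $\varepsilon \in (0, \tfrac14)$ in the corollary is a valid input to the lemma, which indeed it is since $(0, \tfrac14) \subset (0, \tfrac12)$. The entire argument is a two-line reduction, so I do not expect any step to be difficult.
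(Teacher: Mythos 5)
Your proposal is correct and is exactly the argument the paper intends: the paper's one-line proof ("immediate from Lemma~\ref{lem:razb-maj} and the triangle inequality") is precisely your union bound $\prob{x}{P(x)\neq \Maj_n(x)} \le (\tfrac14-\varepsilon)+\tfrac14 = \tfrac12-\varepsilon$ followed by an application of Smolensky's lemma. Nothing is missing.
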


\begin{proof}
The proof is immediate from Lemma~\ref{lem:razb-maj} and the triangle inequality.
\end{proof}

\begin{definition}
An {\em $\eps$-error probabilistic polynomial of degree $D$} for a Boolean function $f : \{0,1\}^n \to \{0,1\}$ is a random variable $\mb P$ taking values from polynomials in $\F_2[X_1,\ldots,X_n]$ of degree at most $D$ such that for all $x \in \{0,1\}^n$, we have $\Pr[\ f(x) = \mb P(x)\ ] \ge 1 - \eps$.
\end{definition}

\begin{definition}
Let $D_\eps(f)$ be the minimum degree of an $\eps$-error probabilistic polynomial for $f$.
\end{definition}

We will make use of the following two lemmas concerning $D_\eps(\cdot)$.

\begin{lemma}[Razborov~\cite{Razborov}]
\label{lem:razb-OR}
Let $\mathrm{OR}_n$ and $\mathrm{AND}_n$ be the OR and AND functions on $n$ variables respectively. Then $D_\varepsilon(\mathrm{OR}_n), D_\varepsilon(\mathrm{AND}_n)\leq \lceil \log(1/\varepsilon)\rceil$.
\end{lemma}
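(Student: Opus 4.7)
The plan is to give a direct randomized construction for $\mathrm{OR}_n$ and then derive the $\mathrm{AND}_n$ bound by De Morgan duality. First, I would set $d := \lceil \log(1/\varepsilon) \rceil$ and independently sample $d$ uniformly random subsets $S_1, \ldots, S_d \subseteq [n]$, forming the $\F_2$-linear forms $L_i(x) := \sum_{j \in S_i} x_j$. My candidate probabilistic polynomial is
\[
\mb P(x) := 1 - \prod_{i=1}^{d} \bigl(1 - L_i(x)\bigr),
\]
which has degree at most $d$ in every realization, since each $L_i$ has degree $1$ and $\mb P$ is built from a product of $d$ such affine factors.

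The main step is a pointwise error analysis. If $x = 0^n$, then every $L_i(x) = 0$ deterministically, so $\mb P(x) = 0 = \mathrm{OR}_n(x)$ with probability $1$. If $x \neq 0^n$, then for each $i$ independently, $L_i(x)$ is the $\F_2$-parity of a uniformly random subset of the coordinates of a nonzero vector, so $L_i(x) = 1$ with probability exactly $1/2$, and the events across $i$ are mutually independent. Consequently, the event that all $L_i(x)$ vanish simultaneously has probability exactly $2^{-d} \le \varepsilon$; whenever at least one $L_i(x) = 1$, the corresponding factor in the product is $0$, and hence $\mb P(x) = 1 = \mathrm{OR}_n(x)$. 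This would give $\Pr[\mb P(x) = \mathrm{OR}_n(x)] \ge 1 - 2^{-d} \ge 1 - \varepsilon$ uniformly in $x$, which is exactly the definition of an $\varepsilon$-error probabilistic polynomial of degree $d$ for $\mathrm{OR}_n$.

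For the bound on $D_\varepsilon(\mathrm{AND}_n)$, I would invoke De Morgan: $\mathrm{AND}_n(x_1,\ldots,x_n) = 1 - \mathrm{OR}_n(1 - x_1,\ldots, 1 - x_n)$. Substituting $x_j \mapsto 1 - x_j$ into the probabilistic polynomial constructed above and then subtracting from $1$ yields an $\varepsilon$-error probabilistic polynomial for $\mathrm{AND}_n$ of the same degree $d$, since the substitution is affine and degree is preserved. I do not anticipate a real technical obstacle: the only nontrivial ingredient is the standard observation that a uniformly random $\F_2$-linear form is unbiased on any fixed nonzero input, and once that is in place the construction, degree bound, and error analysis all fall out immediately.
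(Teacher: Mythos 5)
Your proof is correct: the paper cites this lemma from Razborov without reproving it, and your construction (degree-$\lceil \log(1/\varepsilon)\rceil$ product of uniformly random $\F_2$-linear forms, with the unbiasedness of a random parity on a nonzero input giving per-input error $2^{-d}\le\varepsilon$, plus De Morgan for $\mathrm{AND}_n$) is exactly the classical argument behind the citation. No gaps.
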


\begin{lemma}[Kopparty and Srinivasan~\cite{kopparty2012certifying}]\label{la:eps}
There is an absolute constant $c_1$ such that for any $\varepsilon \in (0,1)$, $D_\eps(f) \le c_1\cdot\lceil \log(1/\eps)\rceil \cdot D_{1/8}(f)$ for all Boolean functions $f$.
\end{lemma}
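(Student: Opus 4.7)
My plan is the textbook error-amplification trick: take many independent copies of a $1/8$-error probabilistic polynomial and combine them via the exact $\F_2$-representation of Majority. The reason this yields only a single $\log(1/\varepsilon)$ blow-up (rather than $\sqrt{\log(1/\varepsilon)}$ or $\log^2(1/\varepsilon)$) is that we need only $r = O(\log(1/\varepsilon))$ copies, and the exact polynomial for the $r$-bit Majority is multilinear, hence of degree at most $r$.

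Concretely, set $D := D_{1/8}(f)$ and let $\mathbf{P}$ be a degree-$D$ probabilistic polynomial witnessing this. Fix an odd integer $r$ (to be chosen as $\Theta(\log(1/\varepsilon))$), draw mutually independent copies $\mathbf{P}_1, \ldots, \mathbf{P}_r$ of $\mathbf{P}$, and let $M_r \in \F_2[Y_1, \ldots, Y_r]$ be the unique multilinear polynomial representing the $r$-bit Majority function, so that $\deg(M_r) \le r$ trivially. Define
\[
\mathbf{R}(X_1, \ldots, X_n) := M_r\!\left(\mathbf{P}_1(X_1, \ldots, X_n), \ldots, \mathbf{P}_r(X_1, \ldots, X_n)\right),
\]
a random polynomial of degree at most $rD$. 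For correctness, fix any $x \in \{0,1\}^n$ and set $Z_i := \mathbf{1}[\mathbf{P}_i(x) \ne f(x)]$; the $Z_i$ are i.i.d.\ Bernoullis with mean at most $1/8$, so a Chernoff bound yields $\Pr[\sum_{i=1}^r Z_i \ge (r+1)/2] \le \exp(-\Omega(r))$. Whenever this bad event fails, a strict majority of the $\mathbf{P}_i(x)$ agree with $f(x)$, and therefore $\mathbf{R}(x) = f(x)$. Taking $r = c \lceil \log(1/\varepsilon) \rceil$ for a sufficiently large absolute constant $c$ drives the error below $\varepsilon$ and leaves $\deg(\mathbf{R}) \le cD \lceil \log(1/\varepsilon) \rceil$, establishing the lemma with $c_1 = c$.

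There is essentially no substantive obstacle. The only decision worth flagging is to use the \emph{exact} multilinear $\F_2$-representation of $M_r$ (of degree $\le r$) rather than a lower-degree probabilistic approximation of Majority: the latter would force us to split the error budget and compose two sources of randomness without improving the final bound, since we still need $r = \Omega(\log(1/\varepsilon))$ copies to Chernoff-concentrate. The exact route makes the degree bound a single multiplication.
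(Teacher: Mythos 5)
Your proof is correct, and it is the standard error-reduction argument: the paper does not prove Lemma~\ref{la:eps} at all but quotes it from Kopparty--Srinivasan~\cite{kopparty2012certifying}, where the same idea (majority vote over $O(\log(1/\eps))$ independent copies, composed with the exact degree-$\le r$ multilinear $\F_2$-representation of $\mathrm{MAJ}_r$, plus a Chernoff bound) is used. So there is nothing to correct; your write-up is essentially the proof the citation stands for.
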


We now state our main result, which shows that every $\AC^0[\oplus]$ formula of size $s$ and depth $d + 1$ admits a $1/8$-error approximating polynomial of degree $O(\frac1d\log s)^d$.

\begin{theorem}
There is an absolute constant $c_2$ such that, if $f$ is computed by an $\AC^0[\oplus]$ formula $F$ of size $s$ and depth $d+1$, then $D_{1/8}(f) \le 3(c_2(\tsfrac{1}{d}\log(s)+1))^d$.
\end{theorem}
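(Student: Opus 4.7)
The plan is induction on $d$. The base cases $d \in \{0,1\}$ (depth at most $2$) follow immediately from Lemma~\ref{lem:razb-OR} together with the observation that MOD$_2$ and NOT gates are represented exactly by polynomials of degree at most $1$.

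For the inductive step, let $F$ be a depth-$(d+1)$ formula of size $s$ with top gate $G$ of fan-in $t$ and subformulas $F_1, \ldots, F_t$ of depth $d$ and sizes $s_1, \ldots, s_t$ satisfying $\sum_i s_i \le s$. I assign each $F_i$ the target error $\eps_i = s_i/(16 s)$, so that $\sum_i \eps_i \le 1/16$. By the inductive hypothesis and Lemma~\ref{la:eps}, each $F_i$ admits an $\eps_i$-error probabilistic polynomial $\mb P_i$ of degree at most $3 c_1 \lceil \log(16 s/s_i) \rceil \cdot (c_2(\tsfrac{1}{d-1}\log s_i + 1))^{d-1}$. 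I then build $\mb P$ for $F$ by composing at the top gate: for MOD$_2$, $\mb P = \sum_i \mb P_i$ (exact, so error $\le \sum_i \eps_i \le 1/8$); for AND or OR, $\mb P = \mb Q(\mb P_1, \ldots, \mb P_t)$ where $\mb Q$ is the degree-$\lceil \log 16 \rceil = 4$, $1/16$-error polynomial from Lemma~\ref{lem:razb-OR}, giving error $\le 1/16 + 1/16 = 1/8$ by a union bound (with independent randomness for $\mb Q$ and the $\mb P_i$).

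The heart of the argument is verifying that the resulting degree is at most $3(c_2(\tsfrac{1}{d}\log s + 1))^d$. Writing $u = \log s$ and $v = \log s_i$, this reduces to showing
\[
4 c_1 (u - v + 4) \left( \frac{v/(d-1) + 1}{u/d + 1} \right)^{d-1} \le c_2 (u/d + 1)
\]
uniformly over $v \in [0, u]$. A logarithmic derivative in $v$ pinpoints the unique critical point $v^\ast = (d-1)(u+3)/d$, at which the fortunate identity $u - v^\ast + 4 = v^\ast/(d-1) + 1 = (u + d + 3)/d$ collapses the left-hand side to $(4 c_1 (u+d+3)/(u+d)) \cdot (1 + 3/(u+d))^{d-1} \le 16 c_1 e^3$, while the boundary cases $v \in \{0, u\}$ are controlled by similar direct estimates (in particular $((u/(d-1)+1)/(u/d+1))^{d-1} \le e$ handles $v = u$). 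Thus the inequality holds whenever $c_2$ is a sufficiently large absolute constant multiple of $c_1$. The main obstacle is precisely this balancing: the weighted choice $\eps_i \propto s_i/s$ is essential, since the naive uniform $\eps_i = 1/(16 t)$ would force $c_2$ to grow with $d$ (because a single large subformula with $v$ close to $u$ would dominate); the weighting places the $\log(1/\eps_i)$ penalty primarily on small subformulas, whose inductive degree bound is correspondingly small, and arranges that the extremal $v^\ast$ admits a clean cancellation between the $(u-v+4)$ and $(v/(d-1)+1)^{d-1}$ factors.
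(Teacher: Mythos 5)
Your proposal is correct and takes essentially the same route as the paper's proof: induction on depth, the weighted error allocation $\eps_i = s_i/(16s)$ combined with Lemma~\ref{la:eps}, composition with a degree-$4$, $1/16$-error polynomial for the top gate, and a single-variable optimization over the subformula size to verify the degree recursion (your constant bookkeeping, e.g.\ $u-v+4$ versus the ceiling's $u-v+5$, only shifts the absolute constant). The sole difference is cosmetic: the paper first absorbs the constants by taking $c_2 \ge 20c_1$ and then proves the clean tight inequality $(b+1)(\tsfrac{a}{d}+1)^d \le (\tsfrac{a+b}{d+1}+1)^{d+1}$ by locating the unique root of its derivative at $b=a/d$, whereas you carry the constants along and bound the maximum at the critical point $v^\ast=(d-1)(u+3)/d$ directly --- the same calculus in a slightly different packaging.
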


\begin{proof}
The proof is an induction on the depth $d$ of the formula. 

The base case $d=0$ corresponds to the case when the formula is a single AND, OR or MOD$_2$  gate and we need to show that $D_{1/8}(f)\leq 3$. In the case that the formula is an AND or OR gate, this follows from Lemma~\ref{lem:razb-OR}. If the formula is a MOD$_2$  gate, this follows from the fact that the MOD$_2$  function is exactly a polynomial of degree $1$. 

Let $d\ge 1$. We assume that the formula $F$ is the AND/OR/MOD$_2$  of sub-formulas $F_1,\ldots,F_m$ computing $f_1,\dots,f_m$ where $F_i$ has size $s_i$ and depth $d+1$. So $F$ has size $s = s_1+\dots+s_m$ and depth $d+2$. Assume that $D_{1/8}(f_i) \le 3(c_2(\frac{1}{d}\log(s_i)+1))^d$ for all $i$. We must show that $D_{1/8}(f) \le 3(c_2(\frac{1}{d+1}\log(s)+1))^{d+1}$.

By Lemma \ref{la:eps}, each $f_i$ has an $s_i/(16s)$-error probabilistic polynomial $\mb P_i$ of degree $c_1\cdot\lceil\log(16s/s_i)\rceil\cdot D_{1/8}(f_i)$, which is at most
\[
  3c_1 \cdot 5(\log(s/s_i)+1)\cdot (c_2(\tsfrac{1}{d}\log(s_i)+1))^d.
\]
Then $(\mb P_1,\dots,\mb P_m)$ jointly computes $(f_1,\dots,f_m)$ with error $1/16$ ($= \sum_{i=1}^m (s_i/(16 s))$). 

By a reasoning identical to the base case, it follows that there exists a $1/16$-error probabilistic polynomial $\mb Q$ of degree $4$ for the output gate of the formula. 

Then $\mb Q(\mb P_1,\dots,\mb P_m)$ is a $1/8$-error probabilistic polynomial for $f$ of degree
\[
  60c_1\cdot\max_i (\log(s/s_i)+1)\cdot (c_2(\tsfrac{1}{d}\log(s_i)+1))^d.
\]

So long as $c_2 \ge 20c_1$, it suffices to show that for all $i$,
\[
  (\log(s/s_i)+1) \cdot (\tsfrac{1}{d}\log(s_i)+1)^d
  \le
  (\tsfrac{1}{d+1}\log(s)+1)^{d+1}.
\]
Consider any $i$ and let $a,b \ge 0$ such that $s_i = 2^a$ and $s = 2^{a+b}$. We must show
\[
  (b+1)\left(\frac{a\vphantom{b}}{d}+1\right)^d
  \le
  \left(\frac{a+b}{d+1}+1\right)^{d+1}.
\]
For fixed $a \ge 0$, as a polynomial in $b$, the function
\[
  p_{a,d}(b) := \left(\frac{a+b}{d+1}+1\right)^{d+1} - (b+1)\left(\frac{a\vphantom{b}}{d}+1\right)^d
\]
is nonnegative over $b \ge 0$ with a unique root at $b = a/d$. This follows from
\[
  \frac{\partial}{\partial b} p_{a,d}(b) = 
  \left(\frac{a+b}{d+1}+1\right)^d - \left(\frac{a\vphantom{b}}{d}+1\right)^d,
\]
which is zero iff $b = a/d$; this value is a minimum of $p_{a,d}$ with $p_{a,d}(a/d) = 0$.
\end{proof}

\begin{corollary}
\label{cor:formula-lbd}
Fix any constant $d$ and let $n\in \mathbb{N}$ be a growing parameter. 
Let $f$ be any $(1/4,n)$-Approximate Majority. Then any $\AC^0[\oplus]$ formula of depth $d$ computing $f$ must have size $\exp(\Omega(dn^{1/2(d-1)}))$ for all $d \le O(\log n)$, where asymptotic notation $O(\cdot)$ and $\Omega(\cdot)$ hide absolute constants (independent of $d$ and $n$). 
\end{corollary}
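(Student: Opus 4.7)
The plan is to derive the corollary as a direct composition of the preceding theorem (the $O(\frac{1}{d}\log s)^d$-degree probabilistic polynomial for depth-$(d+1)$ formulas) and Corollary~\ref{cor:approx-maj-deg} (the $\Omega(\sqrt{n})$ lower bound on approximating polynomials for Approximate Majorities).

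First, I would convert the probabilistic approximation into a deterministic one by averaging. Suppose $F$ is an $\AC^0[\oplus]$ formula of depth $d$ and size $s$ computing a $(1/4,n)$-Approximate Majority $f$. Reindexing the main theorem ($d \mapsto d-1$), we get $D_{1/8}(f) \le 3(c_2(\tfrac{1}{d-1}\log s+1))^{d-1}$. Let $\mb P$ be the corresponding probabilistic polynomial. Since $\Pr[\mb P(x)=f(x)]\ge 7/8$ holds for every fixed $x$, taking expectation over a uniformly random $x$ and swapping the order of expectation yields $\mathbb{E}_{\mb P}[\Pr_x[\mb P(x)=f(x)]]\ge 7/8$, so there exists a \emph{fixed} polynomial $P$ of the same degree with $\Pr_x[P(x)=f(x)]\ge 7/8$, i.e.\ a $1/8$-approximating polynomial for $f$.

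Next, I would apply Corollary~\ref{cor:approx-maj-deg} with $\varepsilon=1/8$: any $1/8$-approximating polynomial for $f$ has degree at least $c\sqrt{n}$ for some absolute constant $c>0$. Combining,
\[
  3\bigl(c_2(\tfrac{1}{d-1}\log s+1)\bigr)^{d-1} \ \ge\ c\sqrt{n},
\]
which rearranges to $\tfrac{1}{d-1}\log s + 1 \ge c' n^{1/2(d-1)}$ for an absolute constant $c'>0$.

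The only mildly subtle point—and the step I would flag as where care is needed—is absorbing the ``$+1$'' on the left-hand side. The hypothesis $d\le O(\frac{\log n}{\cdot})$ (with a sufficiently small absolute constant in the $O(\cdot)$, concretely $d\le \alpha\log n$ so that $n^{1/2(d-1)}\ge 2/c'$) guarantees $c' n^{1/2(d-1)} - 1 \ge \tfrac{c'}{2} n^{1/2(d-1)}$. Under this regime we conclude $\log s \ge \tfrac{c'}{2}(d-1)\, n^{1/2(d-1)} = \Omega(d\, n^{1/2(d-1)})$, and hence $s \ge \exp(\Omega(d\, n^{1/2(d-1)}))$, as required. (For $d=O(1)$ the statement reduces to the familiar $\exp(\Omega(n^{1/2(d-1)}))$ bound, and the constraint $d\le O(\log n)$ is exactly what is needed so that the factor $d$ in the exponent is not swallowed by the additive~$1$.)
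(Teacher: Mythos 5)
Your proposal is correct and follows essentially the same route as the paper's proof: apply the main approximation theorem (reindexed to depth $d$), average over the randomness to extract a fixed $1/8$-approximating polynomial, invoke Corollary~\ref{cor:approx-maj-deg}, and use the hypothesis $d \le O(\log n)$ to absorb the additive lower-order term so that $\log s \ge \Omega(d\,n^{1/2(d-1)})$. The only difference is cosmetic: you spell out the absorption of the ``$+1$'' explicitly, whereas the paper phrases it as $\Omega(dn^{1/2(d-1)})$ dominating $O(d)$ for $d \le \eps\log n$.
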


\begin{proof}
Say that $F$ is an $\AC^0[\oplus]$ formula of depth $d$ and size $s$ computing $f$. Then, by Lemma~\ref{la:eps}, we see that $F$ has a $1/8$-error probabistic polynomial $\mb P$ of degree $D\leq O(O(\frac{1}{d}\log s+1)^{d-1})$. In particular, by an averaging argument, there is some fixed polynomial $P\in \F_2[X_1,\ldots,X_n]$ of degree at most $D$ such that $P$ is a $1/8$-error approximating polynomial for $f$. 

Corollary~\ref{cor:approx-maj-deg} implies that the degree of $P$ must be $\Omega(\sqrt{n})$. Hence, we obtain $O(\frac{1}{d}\log s+1)^{d-1} \ge \Omega(\sqrt{n})$. It follows that
\[
  s \ge \exp(\Omega(dn^{1/2(d-1)}) - O(d)).
\]
Observe that $\Omega(dn^{1/2(d-1)})$ dominates $O(d)$ so long as $d \le \eps \log n$ for some absolute constant $\eps > 0$ (depending on the constants in $\Omega(\cdot)$ and $O(\cdot)$). Hence, we get the claimed lower bound $s \ge \exp(\Omega(dn^{1/2(d-1)}))$ for all $d \le \eps\log n$.
\end{proof}

\section{Upper Bound}

In this section, we show that for any constant $\varepsilon$, there are $(\varepsilon,n)$-Approximate Majorities that can be computed by depth $d$ $\AC^0$ circuits of size $\exp(O(n^{1/2(d-1)}))$.

Let $\varepsilon_0\in (0,1)$ be a small enough constant so that the following inequalities hold for any $\beta \leq \varepsilon_0$
\begin{itemize}
\item $\exp(-\beta) \leq 1-\beta\exp(-\beta)$,
\item $1-\beta \geq \exp(-\beta-\beta^2)\geq \exp(-2\beta)$.
\end{itemize}
(It suffices to take $\varepsilon_0 = 1/2$.) 

We need the following technical lemma.

\begin{lemma}
\label{lem:technicalORAND}
Let $A,s$ be positive reals, $M,n\in\mathbb{N}$,  and $\gamma\in (\frac{1}{n},\frac{1}{10})$ be such that $e^A \geq n^3$, $n \geq \frac{1}{\varepsilon_0}$, and $s\leq n$. Define $I_0(\gamma) := \{y\in \{0,1\}^M\ |\ |y|_1\leq Me^{-A}(1-\gamma)\}$ and $I_1(\gamma) := \{y\in \{0,1\}^M\ |\ |y|_1\geq Me^{-A}(1+\gamma)\}$. If we choose $S\subseteq [M]$ of size $t := \lceil e^{A}\cdot s\rceil$ by picking $t$ random elements from $M$ with replacement, then
\begin{align*}
x\in I_0(\gamma) &\Rightarrow \prob{S}{\bigvee_{j\in S}x_j = 0} \geq \exp(-s)\cdot\exp(s\gamma/2), \\
x\in I_1(\gamma) &\Rightarrow \prob{S}{\bigvee_{j\in S}x_j = 0} \leq \exp(-s)\cdot \exp(-s\gamma).
\end{align*}
Further, if $s\gamma \leq \varepsilon_0$, then the above probabilities can be lower bounded and upper  bounded by $\exp(-s)\cdot (1+s\gamma\exp(-s\gamma))$ and $\exp(-s)\cdot (1-s\gamma\exp(-s\gamma))$ respectively.

A similar statement can be obtained above for the sets $J_1(\gamma) := \{y\in \{0,1\}^M\ |\ |y|_0\leq Me^{-A}(1-\gamma)\}$ and $J_0(\gamma) := \{y\in \{0,1\}^M\ |\ |y|_0\geq Me^{-A}(1+\gamma)\}$, with the event ``$\bigvee_{j\in S}x_j = 0$'' being replaced by the event ``$\bigwedge_{j\in S}x_j = 1$''.
\end{lemma}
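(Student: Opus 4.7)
Since $S$ is drawn i.i.d.\ uniformly from $[M]$ with replacement, the event $\bigvee_{j\in S}x_j = 0$ occurs exactly when every sampled index $j$ has $x_j = 0$, which happens with probability $(1-p)^t$ where $p := |x|_1/M$. All four OR-statements therefore reduce to estimating $(1-p)^t$ with $t = \lceil e^A s\rceil$, in the two regimes $p \leq e^{-A}(1-\gamma)$ and $p \geq e^{-A}(1+\gamma)$. The AND-statements for $J_0, J_1$ follow by rewriting $\bigwedge_{j\in S} x_j = 1$ as $\bigvee_{j\in S} \bar x_j = 0$: complementing $x$ sends $|x|_0$ to $|\bar x|_1$, swapping $J_0, J_1$ with $I_1, I_0$, so the AND-bounds are the OR-bounds applied to $\bar x$.

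The $I_1(\gamma)$ upper bound is the easy direction: the universal inequality $1 - \beta \leq e^{-\beta}$ gives $(1-p)^t \leq \exp(-tp) \leq \exp(-te^{-A}(1+\gamma)) \leq \exp(-s(1+\gamma))$, using $t \geq e^A s$. For the $I_0(\gamma)$ lower bound I would apply the first hypothesized inequality $1 - \beta \geq \exp(-\beta - \beta^2)$ at $\beta = e^{-A}(1-\gamma)$, which is legal because $e^A \geq n^3$ forces $\beta \leq n^{-3} \leq \varepsilon_0$. This yields $(1-p)^t \geq \exp(-t\beta(1+\beta))$; writing $t = e^A s + \delta$ with $\delta \in [0,1)$, the exponent expands as $-s(1-\gamma) - (\text{junk})$ where the junk collects the rounding contribution $\delta\beta$, the second-order piece $te^{-2A}$, and the cross term. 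The calibration $e^A \geq n^3$ and $s \leq n$ bounds the junk by $O(n^{-2})$, and combined with $\gamma > 1/n$ (so $s\gamma > s/n$) this is absorbed into the $s\gamma/2$ slack, leaving $\exp(-s)\exp(s\gamma/2)$.

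For the refined bounds in the regime $s\gamma \leq \varepsilon_0$, the upper bound $\exp(-s)(1 - s\gamma\exp(-s\gamma))$ is immediate from the coarse upper bound $\exp(-s)\exp(-s\gamma)$ by invoking the second hypothesized inequality $\exp(-\beta) \leq 1 - \beta\exp(-\beta)$ at $\beta = s\gamma$. The refined lower bound is strictly stronger than $\exp(-s)\exp(s\gamma/2)$ in this small-$s\gamma$ regime; the cleanest route I see is the exact factorization $1 - e^{-A}(1-\gamma) = (1-e^{-A})\bigl(1 + e^{-A}\gamma/(1-e^{-A})\bigr)$. Raising to the $t$-th power, I would bound $(1-e^{-A})^t \geq \exp(-s - O(n^{-2}))$ via $1-\beta \geq \exp(-\beta-\beta^2)$, and apply Bernoulli to the second factor to extract a multiplicative gain of at least $1 + s\gamma$, which dominates $1 + s\gamma\exp(-s\gamma)$ since $\exp(-s\gamma) \leq 1$.

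The main obstacle I anticipate is pure bookkeeping: carefully tracking the rounding slack $\delta = t - e^A s \in [0,1)$ together with the $\beta^2$ second-order correction throughout the exponent, and checking that the combined error is dominated by $s\gamma/2$ (or by $s\gamma$, for the refined bound) under the stated hypotheses. Conceptually, the lemma is just a quantitative form of the standard observation that a sample of size $\approx e^A s$ from a string of density $\approx e^{-A}$ hits the $1$-set with probability close to $1 - e^{-s}$; the hypotheses $e^A \geq n^3$, $s \leq n$, and $\gamma > 1/n$ are calibrated precisely so that all the rounding and higher-order junk vanishes compared to the main term $s\gamma$.
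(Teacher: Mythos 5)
Your proposal is correct and follows essentially the same route as the paper: reduce each event to $(1-p)^t$ with $p=|x|_1/M$, apply the two elementary inequalities $1-\beta\le e^{-\beta}$ and $1-\beta\ge e^{-\beta-\beta^2}$, absorb the rounding and second-order terms using $e^A\ge n^3$ and $\gamma>1/n$, and get the refined bounds in the regime $s\gamma\le\varepsilon_0$ (your factorization-plus-Bernoulli derivation of the refined lower bound is a minor cosmetic variant of the paper's chaining $\exp(s\gamma\exp(-s\gamma))\ge 1+s\gamma\exp(-s\gamma)$). Note that, exactly as in the paper's own computation, absorbing the additive rounding error into the $s\gamma$ slack implicitly uses that $s$ is not arbitrarily small (which holds in all applications, where $s\ge A$), so this is not a gap relative to the paper.
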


\begin{proof}
We give the proof only for $I_0(\gamma)$ and $I_1(\gamma)$. The proof for $J_0(\gamma)$ and $J_1(\gamma)$ is similar.

Consider first the case that $x\in I_1(\gamma)$. In this case, we have the following computation. 
\begin{align}
\prob{S}{\bigvee_{j\in S} x_j = 0} &\leq \left(1-\frac{1+\gamma}{e^A}\right)^{e^A\cdot s}\notag\\
&\leq \exp(-(1+\gamma)\cdot s) \leq \exp(-s)\cdot \exp(-s\gamma).\label{eq:tech2}
\end{align}
The above implies the first upper bound on $\prob{S}{\bigvee_{j\in S} x_j = 0}$ from the lemma statement. When $s\gamma\leq \varepsilon_0$, we further have $\exp(-s\gamma) \leq 1-s\gamma \exp(-s\gamma)$, which implies the second upper bound. This proves the lemma when $x\in I_1$.

Now consider the case that $x\in I_0(\gamma)$. We have
\begin{align}
\prob{S}{\bigvee_{j\in S}x_j = 0} &\geq \left(1-\frac{1-\gamma}{e^A}\right)^{e^A\cdot s + 1}\notag\\
&\geq \exp\left((-\frac{1-\gamma}{e^A} - \frac{1}{e^{2A}})\cdot (e^A\cdot s+1)\right)\notag\\
&= \exp\left(-s+s\gamma -\frac{s}{e^A}-\frac{1-\gamma}{e^A} - \frac{1}{e^{2A}}\right)\notag\\
&\geq \exp\left(-s+s\gamma -\frac{2s}{e^A}\right)\notag\\
&= \exp(-s)\cdot \exp(s\gamma(1-\frac{2e^{-A}}{\gamma})))\label{eq:tech1}
\end{align}
where for the second inequality we have used the fact that since $e^{-A} \leq \frac{1}{n^3}\leq \varepsilon_0$, we have $1- \frac{1-\gamma}{e^A}\geq \exp(-\frac{1-\gamma}{e^A} - \frac{1}{e^{2A}})$. Since $e^{-A}\leq \frac{1}{n^3}\leq \frac{1}{4n}\leq \gamma/4$, we can lower bound the right hand side of (\ref{eq:tech1}) by $\exp(-s)\cdot \exp(s\gamma/2)$. Also, note that 
\begin{align*}
1-\frac{2e^{-A}}{\gamma} &\geq 1-\frac{2/n^3}{1/n}= 1-\frac{2}{n^2}\\ 
&\geq \exp(-1/n) \geq \exp(-\gamma)\geq \exp(-s\gamma).
\end{align*}
This implies that the RHS of (\ref{eq:tech1}) can also be lower bounded by $\exp(-s)\exp(s\gamma\exp(-s\gamma)) \geq \exp(-s)\cdot(1+s\gamma\exp(-s\gamma))$, which implies the claim about $\prob{S}{\bigvee_{j\in S} x_j = 0}$ assuming that $x\in I_0(\gamma)$.
\end{proof}

We now prove the main result of this section. 

\begin{theorem}
\label{thm:amano-ckt}
For any growing parameter $n\in \mathbb{N}$ and $2 \le d \le O(\frac{\log n}{\log\log n})$ and $\eps > 0$, there is an $(\varepsilon,n)$-Approximate Majority $f_n$ computable by a monotone $\AC^0$ circuit with at most $\exp(O(n^{1/2(d-1)}\log(1/\varepsilon)/\varepsilon))$ many gates, where both $O(\cdot)$'s hide absolute constants (independent of $d,\varepsilon$).
\end{theorem}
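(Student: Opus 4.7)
The plan is to adapt Amano's randomized formula construction for Approximate Majority to the circuit setting by exploiting wire-sharing across layers. Set $M := \exp(\Theta(n^{1/2(d-1)}\log(1/\varepsilon)/\varepsilon))$ and fan-ins $a_1,\dots,a_{d-1}$ as in Amano's formula, with gate types alternating between AND (odd $i$) and OR (even $i$). The random circuit is built layer by layer: layer $0$ consists of the $n$ input variables, and each of the $M$ gates at layer $i$ is the corresponding AND or OR of $a_i$ wires sampled independently and with replacement from the outputs of layer $i-1$. A single output bit is then produced by one or two additional shallow monotone layers that aggregate the $M$ wires of layer $d-1$. The total size is $(d-1)M + O(M) = \exp(O(n^{1/2(d-1)}\log(1/\varepsilon)/\varepsilon))$, since $d = O(\log n/\log\log n)$ is absorbed into the exponent.

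For the analysis, fix an input $x$ with $|x|_1 = n/2 + \delta$ and define the ``ideal'' densities $\pi_i = \pi_i(\delta)$ inductively: $\pi_0 := |x|_1/n$ and $\pi_i$ is the probability that a single level-$i$ gate outputs $1$ when each of its inputs is an independent Bernoulli with mean $\pi_{i-1}$. Amano's parameter choice ensures that $\pi_{d-1}$ is a sharp threshold function of $\delta$, with transition width $\Theta(\varepsilon\sqrt n)$, so $\pi_{d-1}(\delta)$ is cleanly separated from the balanced-input value whenever $|\delta| \ge \varepsilon\sqrt n$. To pass to empirical densities $q_i(x) := |C_i(x)|_1/M$, I apply Lemma~\ref{lem:technicalORAND} to the actual length-$M$ string of level-$(i-1)$ outputs: if $q_{i-1}$ is close to $\pi_{i-1}$, then each level-$i$ gate outputs $1$ with probability close to $\pi_i$, and since the $M$ level-$i$ gates sample their fan-ins independently, a Chernoff bound gives $|q_i - \pi_i| \le 1/\poly(n)$ with probability $\ge 1 - \exp(-\Omega(M))$. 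Chaining through $d$ levels and union-bounding over the $2^n$ inputs is absorbed by $\exp(-\Omega(M))$ since $M \gg n$, so with positive probability $q_{d-1}(x)$ lies on the correct side of the threshold for every input satisfying $||x|_1 - n/2| \ge \varepsilon\sqrt n$. The final aggregation layer(s) then compute $\Maj_n(x)$ on all such inputs, and since only an $O(\varepsilon)$-fraction of inputs lie in the transition window, the resulting function is an $(\varepsilon,n)$-Approximate Majority.

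The main obstacle is the careful design of the final two layers (as alluded to in the proof-sketch footnote): they must aggregate the $M$ wires of $C_{d-1}$ into a single bit while preserving monotonicity, keeping the total depth exactly $d$, and not inflating the $\log(1/\varepsilon)/\varepsilon$ dependence in the exponent. I would use a small monotone gadget mirroring the main construction, with parameters tuned specifically for this final stage, whose correctness follows from the same Chernoff-based concentration argument applied to $q_{d-1}$. A secondary technical issue is the book-keeping of the densities $\pi_i$, which in Amano's analysis oscillate between values near $0$ and near $1$ rather than staying near $1/2$; the chain-rule derivative computation that establishes the sharp-threshold property of $\pi_{d-1}$ must therefore be carried out at asymmetric operating points, following Amano's calculation.
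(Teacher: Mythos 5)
Your overall architecture matches the paper's: $M$ shared gates per level, each an AND/OR of a small random sample (with replacement) of the previous level's outputs, per-level Chernoff concentration of the empirical density, a union bound over all $2^n$ inputs absorbed by $\exp(-\Omega(M))$, and a final fixing of the randomness. The genuine gap is exactly the part you defer as ``the main obstacle'': the last two levels, which is where the construction must deviate from a layered Amano formula and where the depth-$d$ budget, the $\log(1/\varepsilon)/\varepsilon$ factor, and the error accounting all live. Two concrete problems with your sketch. First, the margin bookkeeping does not leave you with a sharp threshold after $d-1$ generic levels inside depth $d$: with per-level fan-in $\approx e^{A}\cdot A$ (where $A=\lfloor n^{1/2(d-1)}\rfloor$), the relative margin around the operating density $e^{-A}$ grows only by a factor $\approx A$ per level, so after the $d-2$ levels you can afford (you must reserve two levels for aggregation to stay at depth $d$; ``one or two additional layers'' on top of $d-1$ Amano levels either breaks the depth bound or is unanalyzed) the margin is only $\Theta(\varepsilon/A)$, not constant. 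The paper handles this by giving the penultimate level the boosted fan-in $s=\Theta(A\log(1/\varepsilon)/\varepsilon)$, which is precisely where the $\log(1/\varepsilon)/\varepsilon$ in the size bound comes from; in your write-up this dependence is simply inserted into $M$ without a mechanism.

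Second, and more fundamentally, your claim that ``the final aggregation layer(s) then compute $\Maj_n(x)$ on all such inputs'' cannot be established by the Chernoff-plus-union-bound argument you invoke, because at the final single-output gate there is no longer an $M$-fold average to concentrate. With the available parameters, the last OR (over $\approx e^{s}$ sampled wires of the penultimate level) is correct on a fixed far input only with probability $1-O(\varepsilon)$ — e.g.\ for $x\in N_\varepsilon$ the failure probability is about $e^{s}\cdot\varepsilon^{2}e^{-s}=\varepsilon^{2}$, nowhere near $2^{-n}$ — so a union bound over $Y_\varepsilon\cup N_\varepsilon$ is impossible at this stage, and indeed no circuit of this size can be correct on \emph{all} far inputs at the top with such a gadget. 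The paper instead uses an averaging (Markov) argument to fix the final sample $T$ so that the output errs on at most an $O(\varepsilon)$ fraction of $Y_\varepsilon\cup N_\varepsilon$, and charges this to the error budget alongside the $O(\varepsilon)$ mass of the middle band. Your proposal needs this change both in the statement of what the final stage achieves and in the derandomization step; as written, the final-stage plan would fail.
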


\begin{proof}
We assume throughout that $\varepsilon$ is a small enough constant and that $n$ is large enough for various inequalities to hold. We will actually construct a monotone circuit of depth $d$ and size $\exp(O(n^{1/2(d-1)}\log(1/\varepsilon)/\varepsilon))$ computing a $(4\varepsilon,n)$-Approximate Majority, which also implies the theorem. 

Fix parameters $A=\lfloor n^{1/2(d-1)}\rfloor$ and $M = \lceil e^{10A}\rceil$. We assume that $A\geq 10\log n$ (which holds as long as $d \le \frac{c\log n}{\log\log n}$ for an absolute constant $c > 0$) and that $\varepsilon \leq \varepsilon_0$.

Define a sequence of real numbers $\gamma_0,\gamma_1,\ldots,\gamma_{d-2}$ as follows:

\begin{align*}
\gamma_0 &= \frac{\varepsilon}{\sqrt{n}}\\
\gamma_i &= A\gamma_{i-1}\exp(-2A\gamma_{i-1}), \text{ for each $i\in [d-2]$.}
\end{align*}

It is clear that $\gamma_i \leq A^i \gamma_0$ for each $i\in [d-2]$. As a result we also obtain 
\begin{align}
\gamma_i &= A^i \gamma_0\exp(-2A(\gamma_0+\gamma_1+\cdots+\gamma_{i-1}))\notag\\
&\geq A^i \gamma_0\exp(-2\gamma_0A(1+A+A^2+\cdots+A^{i-1})) \geq A^i\gamma_0\exp(-3A^i\gamma_0).\label{eq:gamma-i}
\end{align}

Let
\begin{align*}
Y_\varepsilon &= \left\{x\in \{0,1\}^n\ \middle|\ |x|_1\geq \left(\frac{1}{2}+\frac{\varepsilon}{\sqrt{n}}\right)n\right\},\\
N_\varepsilon &= \left\{x\in \{0,1\}^n\ \middle|\ |x|_1\leq \left(\frac{1}{2}-\frac{\varepsilon}{\sqrt{n}}\right)n\right\}.
\end{align*}

The idea is to define a sequence of circuits $C_1,C_2,\ldots,C_{d-2}$ with $n$ inputs and $M$ outputs such that $C_i$ has depth $i$ and $iM$ many (non-input) gates. Further, for odd $i$ 
\begin{align}
x\in N_\varepsilon &\Rightarrow C_i(x)\in I_0(\gamma_i)\notag\\
x\in Y_\varepsilon &\Rightarrow C_i(x)\in I_1(\gamma_i)\label{eq:oddi}
\end{align}
and similarly for even $i$
\begin{align}
x\in N_\varepsilon &\Rightarrow C_i(x)\in J_0(\gamma_i)\notag\\
x\in Y_\varepsilon &\Rightarrow C_i(x)\in J_1(\gamma_i).\label{eq:eveni}
\end{align}

After this is done, we will add on top a depth-$2$ circuit that will reject most inputs from $I_0(\gamma_{d-2})$ or $J_0(\gamma_{d-2})$ --- depending on whether $d-2$ is odd or even respectively --- and accept most inputs from $I_1(\gamma_{d-2})$ or $J_1(\gamma_{d-2})$. 

We begin with the construction of $C_1,\ldots,C_{d-2}$ which is done by induction. 

\paragraph{Construction of $C_1$.} The base case of the induction is the construction of $C_1$, which is done as follows. We choose $M$ i.i.d.\ random subsets $T_1,\ldots,T_M\subseteq [n]$ in the following way: for each $i\in [M]$, we sample $A$ random elements of $[n]$ with replacement. Let $b_i^x = \bigwedge_{j\in T_i}x_j$.

If $x\in N_\varepsilon$, then the probability that $b_i^x = 1$ is given by 
\begin{align*}
\prob{}{b_i^x = 1} &\le \left(\frac{1}{2}- \gamma_0\right)^{A}
\leq \frac{1}{2^A} \left(1-2\gamma_0\right)^A \leq \frac{1}{e^A}(1-\gamma_0A)
\end{align*}
where the last inequality follows from the fact that $(1-z)^A \leq (1-zA + \frac{A^2z^2}{2})$.

Let $\delta = 1/n^3$. Note in particular that $2\delta/\gamma_0A \leq \varepsilon_0$ for large enough $n$.

By a Chernoff bound, the probability that $\frac{1}{M}\sum_i b_i^x\geq \frac{1}{e^A}(1-\gamma_0A)(1+\delta)$ is bounded by $\exp(-\Omega(\delta^2M/e^A))\leq \exp(-\Omega(e^{9A}/n^6))\leq \exp(-n)$, since $e^A \geq n^{10}$. Thus, with probability at least $1-\exp(-n)$, we have
\begin{align}
\frac{\sum_i b_i^x}{M}&\leq \frac{1}{e^A}(1-\gamma_0A)(1+\delta)\notag\\
&\leq \frac{1}{e^A}(1-\gamma_0A + \delta) = \frac{1}{e^A}(1-\gamma_0A(1-\frac{\delta}{\gamma_0A}))\notag\\
&\leq \frac{1}{e^A}(1-\gamma_0A\exp(-\frac{2\delta}{\gamma_0A}))\leq \frac{1}{e^A}(1-\gamma_0A\exp(-\gamma_0A))\notag\\
&\leq \frac{1}{e^A}(1-\gamma_1).\label{eq:C1.1}
\end{align}
Above, we have used the fact that $(1-\frac{\delta}{\gamma_0A})\geq \exp(\frac{-2\delta}{\gamma_0A})$ since $\delta/\gamma_0A \leq \varepsilon_0$ for large enough $n$, as noted above.

If $x\in Y_\varepsilon$, then the probability that $b_i^x = 1$ is given by 
\begin{align*}
\prob{}{b_i^x = 1} &\geq \left(\frac{1}{2}+ \gamma_0\right)^{A}\\
&\geq \frac{1}{2^A} \left(1+2\gamma_0\right)^A \geq \frac{1}{e^A}(1+\gamma_0A)\\
&\geq \frac{1}{e^A}(1+\gamma_0A).
\end{align*}

As above, we can argue that the probability that $\frac{1}{M}\sum_i b_i^x \leq \frac{1}{e^A}(1+\gamma_0A)(1-\delta)$ is at most $\exp(-n)$. Thus, with probability $1-\exp(-n)$

\begin{align}
\frac{\sum_i b_i^x}{M}&\geq \frac{1}{e^A}(1+\gamma_0A)(1-\delta)\notag\\
&\geq \frac{1}{e^A}(1+\gamma_0A -2\delta) = \frac{1}{e^A}(1+\gamma_0A(1-\frac{2\delta}{\gamma_0A}))\notag\\
&\geq \frac{1}{e^A}(1+\gamma_0A\exp(-\frac{4\delta}{\gamma_0A}))\geq \frac{1}{e^A}(1+\gamma_0A\exp(-\gamma_0A))\notag\\
&\geq \frac{1}{e^A}(1+\gamma_1).\label{eq:C1.2}
\end{align}
Thus, by a union bound over $x$, we can fix a choice of $T_1,\ldots,T_M$ so that (\ref{eq:C1.1}) holds for all $x\in N_\varepsilon$ and (\ref{eq:C1.2}) holds for all $x\in Y_\varepsilon$. Hence, (\ref{eq:oddi}) holds for $i=1$ as required. This concludes the construction of $C_1$, which just outputs the values of $\bigwedge_{j\in T_i}x_j$   for each $i$.
 
\paragraph{Construction of $C_{i+1}$.} For the inductive case, we proceed as follows. We assume that $i$ is odd (the case that $i$ is even is similar). So by the inductive hypothesis, we know that (\ref{eq:oddi}) holds and hence that $C_{i}(x) \in I_0(\gamma_i)$ or $I_1(\gamma_i)$ depending on whether $x\in N_\varepsilon$ or $Y_\varepsilon$. Let $\gamma := \gamma_i$. Let the output gates of $C_i$ be $g_1,\ldots,g_M$.

We choose $T_1,\ldots,T_M\subseteq [M]$ randomly as in the statement of Lemma~\ref{lem:technicalORAND} with $s = A$. Note that the chosen parameters satisfy all the hypotheses of Lemma~\ref{lem:technicalORAND}. Further we also have $s\gamma \leq A\cdot A^i\gamma_0 \leq A^{d-1}\cdot \frac{\varepsilon}{\sqrt{n}}\leq \varepsilon_0$.

The random circuit $C'$ is defined to be the circuit obtained by adding $M$ OR gates to $C_i$ such that the $j$th OR gate computes $\bigvee_{k\in T_j}g_k$. Let $b_j^x$ be the output of the $j$th OR gate on $C_i(x)$. 

By Lemma~\ref{lem:technicalORAND}, we have
\begin{align}
x\in N_\varepsilon &\Rightarrow \prob{S}{b_j^x = 0} \geq \exp(-A)\cdot (1+A\gamma\exp(-A\gamma))\notag \\
x\in Y_\varepsilon &\Rightarrow \prob{S}{b_j^x = 0} \leq \exp(-A)\cdot(1-A\gamma\exp(-A\gamma))\label{eq:Ci.1}
\end{align}

Let $\delta = \frac{1}{n^3}$. Note that $A\gamma \in [\frac{1}{\sqrt{n}},\frac{1}{n^{1/2(d-1)}}]$ and hence for large enough $n$, $\frac{2\delta}{A\gamma\exp(-A\gamma)}\leq \varepsilon_0$.

Assume $x\in N_\varepsilon$. In this case, the Chernoff bound implies that the probability that $\sum_{j\in [M]}b_j^x\leq M\exp(-A)\cdot (1+A\gamma\exp(-A\gamma))(1-\delta)$ is at most $\exp(-\Omega(\delta^2M/e^A))\leq \exp(-n)$. When this event does not occur, we have
\begin{align}
\frac{\sum_i b_i^x}{M}&\geq \frac{1}{e^A}(1+A\gamma\exp(-A\gamma))(1-\delta)\notag\\
&\geq \frac{1}{e^A}(1+A\gamma\exp(-A\gamma) -2\delta) = \frac{1}{e^A}(1+A\gamma\exp(-A\gamma)(1-\frac{2\delta}{A\gamma\exp(-A\gamma)}))\notag\\
&\geq \frac{1}{e^A}(1+A\gamma\exp(-A\gamma)\cdot \exp(-\frac{4\delta}{A\gamma\exp(-A\gamma)}))\notag\\
&\geq \frac{1}{e^A}(1+A\gamma\exp(-A\gamma)\cdot \exp(-A\gamma))\notag\\
&\geq \frac{1}{e^A}(1+A\gamma\exp(-2A\gamma))\geq \frac{1}{e^A}(1+\gamma_{i+1}).\label{eq:Ci.2}
\end{align}

We have used above that for large enough $n$, $\frac{2\delta}{A\gamma\exp(-A\gamma)}\leq \varepsilon_0$ and hence $1-\frac{2\delta}{A\gamma\exp(-A\gamma)}\geq \exp(\frac{-4\delta}{A\gamma\exp(-A\gamma)})$.

Similarly when $x\in Y_\varepsilon$, the Chernoff bound tells us that the probability that $\sum_{j\in [M]}b_j^x\geq M\exp(-A)\cdot (1-A\gamma\exp(-A\gamma))(1+\delta)$ is at most $\exp(-n)$. In this case, we get
\begin{align}
\frac{\sum_i b_i^x}{M}&\leq \frac{1}{e^A}(1-A\gamma\exp(-A\gamma))(1+\delta)\notag\\
&\leq \frac{1}{e^A}(1-A\gamma\exp(-A\gamma) +\delta) = \frac{1}{e^A}(1-A\gamma\exp(-A\gamma)(1-\frac{\delta}{A\gamma\exp(-A\gamma)}))\notag\\
&\leq \frac{1}{e^A}(1-A\gamma\exp(-A\gamma)\cdot\exp(-\frac{2\delta}{A\gamma\exp(-A\gamma)}))\notag\\
&\leq \frac{1}{e^A}(1-A\gamma\exp(-A\gamma)\cdot \exp(-A\gamma))\notag\\
&= \frac{1}{e^A}(1-A\gamma\exp(-2A\gamma))\geq \frac{1}{e^A}(1-\gamma_{i+1}).\label{eq:Ci.3}
\end{align}

By a union bound, we can fix $T_1,\ldots,T_M$ so that (\ref{eq:Ci.2}) and (\ref{eq:Ci.3}) are true for all $x\in N_\varepsilon$ and $x\in Y_\varepsilon$ respectively. This gives us the circuit $C_{i+1}$ which satisfies all the required properties.

\paragraph{The top two levels of the circuit.} At the end of the above procedure we have a circuit $C_{d-2}$ of depth $d-2$ and at most $(d-2)M$ gates that satisfies one of (\ref{eq:oddi}) or (\ref{eq:eveni}) depending on whether $d-2$ is odd or even respectively. We assume that $d-2$ is even (the other case is similar). 

Define $\gamma := \gamma_{d-2}$. Recall from (\ref{eq:gamma-i}) that $\gamma \geq A^{d-2}\gamma_0\exp(-3A^{d-2}\gamma_0)\geq A^{d-2}\gamma_0/2$.

Let $M' = \lceil \exp(\frac{10A\log(1/\varepsilon)}{\varepsilon}+10A)\rceil$. We choose $M'$ many subsets $T_1,\ldots,T_{M'}\subseteq [M]$ i.i.d. so that each $T_j$ is picked as in Lemma~\ref{lem:technicalORAND} with $s = 10A\log(1/\varepsilon)/\varepsilon$. Note that 

\[
s\gamma \geq s\frac{A^{d-2}\gamma_0}{2}= \frac{10A\log(1/\varepsilon)}{\varepsilon}\cdot \frac{A^{d-2}}{2}\cdot \frac{\varepsilon}{\sqrt{n}}\geq 5\log(1/\varepsilon).
\]

Say $g_1,\ldots,g_M$ are the output gates of $C_{d-2}$. We define the random circuit $C'$ (with $n$ inputs and $M'$ outputs) to be the circuit obtained by adding $M'$ AND gates such that the $j$th AND gate computes $\bigwedge_{k\in T_j}g_k$. Let $b_j^x$ be the output of the $j$th AND gate on $C_{d-2}(x)$.

By Lemma~\ref{lem:technicalORAND}, we have
\begin{align}
x\in N_\varepsilon &\Rightarrow \prob{S}{b_j^x = 1} \leq \exp(-s)\cdot \exp(-s\gamma)\leq \varepsilon^2\cdot\exp(-s) \notag \\
x\in Y_\varepsilon &\Rightarrow \prob{S}{b_j^x = 1} \geq \exp(-s)\cdot\exp(s\gamma/2)\geq \frac{\exp(-s)}{\varepsilon^2}. \label{eq:Ctop.1}
\end{align}

Say $x\in N_\varepsilon$. By a Chernoff bound, the probability that $\sum_{j}b_j^x \geq 2\varepsilon^2 M'\exp(-s)$ is at most $\exp(-\Omega(\varepsilon^2 M'\exp(-s))) \leq \exp(-\Omega(\varepsilon^2 e^{10A}))\leq \exp(-n)$. Similarly, when $x\in Y_\varepsilon$, the probability that $\sum_{j}b_j^x \leq\frac{M'\exp(-s)}{2\varepsilon^2}$ is also bounded by $\exp(-n)$. By a union bound, we can fix a $T_1,\ldots,T_{M'}$ to get a circuit $C_{d-1}$ such that 
\begin{align}
x\in N_\varepsilon &\Rightarrow |C_{d-1}(x)|_1 \leq 2\varepsilon^2 \exp(-s)M' \notag \\
x\in Y_\varepsilon &\Rightarrow |C_{d-1}(x)|_1 \geq \frac{1}{2\varepsilon^2} \exp(-s)M'. \label{eq:Ctop.2}
\end{align}

This gives us the depth $d-1$ circuit $C_{d-1}$. Note that $C_{d-1}$ has $M' + O(dM) = O(M')$ gates.

To get the depth $d$ circuit, we choose a random subset $T\subseteq [M']$ by sampling exactly $\lceil\exp(s)\rceil$  many elements of $[M']$ with replacement. We construct a random depth-$d$ circuit $C'_d$ by taking the OR of the the output gates of $C_{d-1}$ indexed by the subset $T$. 

From (\ref{eq:Ctop.2}) it follows that 
\begin{align*}
x\in N_\varepsilon &\Rightarrow \prob{T}{C'_d(x) = 1}\leq |T|\cdot 2\varepsilon^2\exp(-s) \leq 4\varepsilon^2 < \varepsilon\\
x\in Y_\varepsilon &\Rightarrow \prob{T}{C'_d(x) = 0}\leq \left(1-\frac{\exp(-s)}{2\varepsilon^2}\right)^{\exp(s)} \leq \exp(-1/2\varepsilon^2) < \varepsilon.
\end{align*}

The final inequalities in each case above hold as long as $\varepsilon$ is a small enough constant. 

It follows from the above that there is a choice for $T$ such that $C'_d$ makes an error --- i.e. $C'_d(x) = 1$ for $x\in N_\varepsilon$ or $C'_d(x) = 0$ for $x\in Y_\varepsilon$ --- on at most a $2\varepsilon$ fraction of inputs from  $N_\varepsilon\cup Y_\varepsilon$. We fix such a choice for $T$ and the corresponding circuit $C$. 

We have 
\begin{align*}
\prob{x\in \{0,1\}^n}{C(x) \neq \mathrm{Maj}_n(x)}&\leq \prob{x\in Y_\varepsilon\cup N_\varepsilon}{C(x) \neq \mathrm{Maj}_n(x)} + \prob{x\in\{0,1\}^n}{x\not\in Y_\varepsilon\cup N_\varepsilon}\\
&\leq 2\varepsilon + \prob{x\in\{0,1\}^n}{x\not\in Y_\varepsilon\cup N_\varepsilon}.
\end{align*}

Finally by Stirling's approximation we get
\[
\prob{x\in\{0,1\}^n}{x\not\in Y_\varepsilon\cup N_\varepsilon} = \frac{1}{2^n}\sum_{m\in [\frac{n}{2}-\varepsilon\sqrt{n}, \frac{n}{2}+\varepsilon\sqrt{n}]}\binom{n}{m} \leq \frac{1}{2^n}\sum_{m\in [\frac{n}{2}-\varepsilon\sqrt{n}, \frac{n}{2}+\varepsilon\sqrt{n}]}\binom{n}{n/2} \leq 2\varepsilon.
\]

Hence we see that the circuit $C$ computes a $(4\varepsilon,n)$-Approximate Majority, which proves Theorem~\ref{thm:amano-ckt}.

The circuit has depth $d$ and size $O(M') = \exp(O(n^{1/2(d-1)}\log(1/\varepsilon)/\varepsilon))$.
\end{proof}

\section{Conclusion}
\label{sec:conclusion}

Our main results extend straightforwardly to $\AC^0[\mathrm{MOD}_p]$ for any fixed prime $p$. The proofs are exactly the same except for the fact that the approximating polynomials of degree $O(\frac{1}{d}\log s)^{d-1}$ from Section~\ref{sec:lbd} are constructed over $\F_p$. 

Using the fact~\cite{Smolensky87} that any $(1/4)$-approximating polynomial over $\F_p$ ($p$ odd) for the Parity function on $n$ variables must have degree $\Omega(\sqrt{n})$, we see that any polynomial-sized $\AC^0[\mathrm{MOD}_p]$ formula computing the Parity function on $n$ variables must have depth $\Omega(\log n)$. This strengthens a result of Rossman~\cite{rossman2015average} which gives this statement for $\AC^0$ formulas.

\paragraph{Acknowledgements.}

We thank Rahul Santhanam for valuable discussions. We also thank the organizers of the 2016 Complexity Semester at St.\ Petersburg, where this collaboration began.

\end{document}